\newtheorem{observation}{Observation}
\newtheorem{example}{Example}
\newtheorem{lemma}{Lemma}
\newtheorem{claim}{Claim}
\newtheorem{definition}{Definition}
\newtheorem{theorem}{Theorem}
\title{Linear Tree Constraints}
\author{Sabine Bauer and Martin Hofmann, University of Munich}
\begin{document}
\maketitle

\begin{abstract}
 Linear tree constraints were introduced by Hofmann and Rodriguez in the context of amortized resource analysis for object oriented programs.
 More precisely, they gave a reduction from inference of resource types to constraint solving. 
 Thus, once we have found an algorithm to solve the constraints generated from a program,
 we can read off the resource consumption from their solutions.
 
 These constraints have the form of pointwise linear inequalities between infinite trees labeled with nonnegative rational numbers.
 We are interested in the question if a system of such constraints is simultaneously satisfiable.
 Bauer and Hofmann have recently identified a fragment of the tree constraint problem (UTC) that is still sufficient for program analysis and
 they proved that the list case of UTC is decidable (which was presented at LPAR-21),
 whereas the case with trees of degree at least two remained open.
 In this paper, we solve this problem.
 We give a decision procedure that covers the entire range of constraints needed for resource analysis. 
\end{abstract}

\section{Introduction and Related Work}
We start with a short overview of related work in amortized resource analysis, because that is where the constraint problem originates from.
The idea of amortized analysis goes back to the 1980s \cite{doi:10.1137/0606031}. It is an approach that takes into account not only the worst case resource consumption of programs (which may be much more than
one has in practice) but the worst case \emph{average} resource usage of \emph{sequences of operations} (cf. \cite{JH2011}). The benefit from that is that in the change of data structures during 
a computation additional resources may become available (i.e. there may be operations that bring the data into a state such that the following operations can be carried out more efficiently).
Then one knows that in the next step a better bound than the worst case bound will hold.
One well known example is copying a FIFO queue that is modeled with two stacks \cite{HAH2012}. There one starts with pushing the elements on the first stack and when the first POP operation is done,
one has to move all elements to the next stack to reverse their order. Then the next POPs are simple because most of the work has been done by moving the previous entries to the other stack.
Another example are the so called self-adjusting data structures \cite{doi:10.1137/0606031}.

Hofmann and Jost first applied amortized analysis by the potential method to first-order functional programs in \cite{HJ2003}. They annotated the types in the programs with the available resources
of the data structure and then introduced typing rules to reason about the resource consumption of functions. There they had the restriction that the potential was required
to be linear. The approach was later generalized to multivariate polynomial potential by Hoffmann \cite{HoffmannHofmann2010,Hoffmann2010,HAH2012,JH2011}. 
This was the starting point for many other investigations in this direction. Hofmann and Moser applied amortized analysis to term rewriting \cite{hofmann_et_al:LIPIcs:2015:5167,DBLP:journals/corr/HofmannM14},
Hoffmann refined his work, made it fully automatic and carried over the analysis to concurrent programs and programs in C and OCaml \cite{DBLP:conf/popl/HoffmannDW17,CHS2015,CHSR2014,Hoffmann2014,Hoffmann2015}.
Rodriguez introduced an amortized analysis for a fragment of Java, which features object oriented programming, polymorphic functions and monomorphic recursion
 \cite{HofRod:Esop13,HofRod09,DR2012,HR2012} and is called RAJA (Resource Aware JAva).
 
 Among other related work, mainly on resource analysis, are \cite{dantchevinfinite,Gradel-AutomaticStructures,HF,STVR:STVR1569,LPAR-21:Analyzing_Runtime_Complexity_via}, which use different methods than our approach.
 
The analysis system for RAJA by Rodriguez and Hofmann is the motivation for our tree constraints.
There the resource-type inference algorithm outputs conditions that must hold for the potential of the objects (represented as trees) in form of linear tree constraints.
One can determine the resource consumption of a RAJA program if one has a solution to its constraints. 
Recently, the list constraint satisfiability problem for RAJA was proven decidable \cite{LPAR-21:Decidable_linear_list_constraints}. 
In this paper, we generalize that argument to trees.
This decidability result enables us to analyze arbitrary RAJA programs with respect to their resource consumption.
Until now this was only possible for a subset of programs that need linear resources.

In the prototype implementation of RAJA\footnote{raja.tcs.ifi.lmu.de}, one
of the examples is sorting a list using merge-sort. There linear bounds are possible
by using static garbage collection, namely $\mathit{free}$ expressions that make additional
potential available for the further computation. There is research in this direction
\cite{DBLP:conf/iwmm/UnnikrishnanS09,AlbertGG10}, but by now there are still open questions about the realization
of a static garbage collector in Java. If we omit the $\mathit{free}$ expressions in the code,
the program is no longer analyzable, which means that it then requires nonlinear potential annotations. 
Nonlinear bounds make our analysis independent of
this construction and thus closer to real Java.

In addition to that, especially for bigger 
programs (like bank account models) or programs with auxiliary functions (like the sieve of Eratosthenes) or nested data structures, 
the constraint generation is very involved and often leads to superlinear potential. The same is true for cascades of recursive calls.

This paper is organized as follows: in the next section \ref{s2}, we formulate the tree constraint satisfiability problem and recall existing results.
Then we prove decidability of the tree case in section \ref{dec} and conclude in section \ref{s6}.

\section{Syntax and Semantics}\label{s2}
A tree constraint system is a set of pointwise linear inequalities between tree variables, 
as for example $x \geq r(x)+l(x)+y$, where $x,y$ are binary trees with labels $l$ and $r$ such that $r(x)$ is the right subtree of $x$ (and
$l(x)$ the left subtree.) In our setting, these tree variables can be instantiated with infinite trees that contain a nonnegative rational number in each node.
The degree of the trees is arbitrary but finite.

Let $L$ be a finite set of tree labels, $\lambda$ a variable, $n$ a number and let $\lozenge(x)$ denote the root of tree $x$.
The formal syntax for the linear tree constraints is shown in Figure \ref{syntaxgeneral}.

\begin{figure}[h]
\caption{Linear Tree Constraint Syntax}
\label{syntaxgeneral}
 \begin{align*}
 t &\Coloneqq x | l(t), \text{ where $l \in L$ with $|L| <\infty $ } &(\text{Atomic tree})\\
 te &\Coloneqq t | te+te  &(\text{Tree term})\\
 c &\Coloneqq te \geq te&(\text{Tree constraint})
\end{align*}
\end{figure}
In addition to the tree constraints, we have arithmetic constraints given for the numbers in selected nodes of the trees 
that take the form of an arbitrary linear program with integer coefficients.
They are the same as tree constraints with the difference that they can include numbers and hold only for the roots, which are arithmetic variables. 
An example for an arithmetic list constraint (where the root symbol $\lozenge$ becomes $head$ and the only label is $tail$) 
is $head(x)+ head(y)  \geq 2+ head(tail(z))$.
Figure \ref{syntaxConstrA} gives the syntax for them.
\begin{figure}[h]
\caption{Arithmetic Constraint Syntax}
\label{syntaxConstrA}
 \begin{align*}
 v &\Coloneqq n  | \lambda |  \lozenge(t) &(\text{Atomic arithmetic expression})\\
 h &\Coloneqq v | h+h  &(\text{Arithmetic term})\\
 c &\Coloneqq h \geq h&(\text{Arithmetic constraint})
\end{align*}
\end{figure}

Each inequality over tree variables corresponds to infinitely many inequalities over arithmetic variables (i.e. variables for the numbers in the nodes.)
Thus the problem to decide whether a set of tree constraints is simultaneously satisfiable can not directly be reduced to feasibility of a (finite) linear program.

A solution of the tree constraints is a set of infinite trees for which the constraints hold pointwise for each number in the nodes.
More precisely, a constraint $x \geq y$ holds for concrete trees $t_1,t_2$, if $\lozenge(t_1)\geq \lozenge(t_2)$ and for all labels $l$ 
(denoting the immediate subtrees) holds $l(t_1) \geq l(t_2)$ 
(cf. rule (Label) in Figure \ref{rules}).
\begin{figure}[h]
\caption{Tree with infinite number of different subtrees}\label{tree}
\begin{center}
\begin{tikzpicture}[scale=1]
\Tree
[.1
[.3
 [.9 
  [.27
  [.$\cdots$ ][.$\cdots$ ]
  ]
  [.18
  [.$\cdots$ ][.$\cdots$ ]
  ]
 ] 
 [.6 
 [.18
  [.$\cdots$ ][.$\cdots$ ]
  ]
  [.12
  [.$\cdots$ ][.$\cdots$ ]
  ]
 ]
] 
[.2
 [.6 
  [.18
  [.$\cdots$ ][.$\cdots$ ]
  ]
  [.12
  [.$\cdots$ ][.$\cdots$ ]
  ]
 ] 
 [.4 
 [.12
  [.$\cdots$ ][.$\cdots$ ]
  ]
  [.8
  [.$\cdots$ ][.$\cdots$ ]
  ]
 ]
] ].1
\end{tikzpicture}
\end{center}
\end{figure}
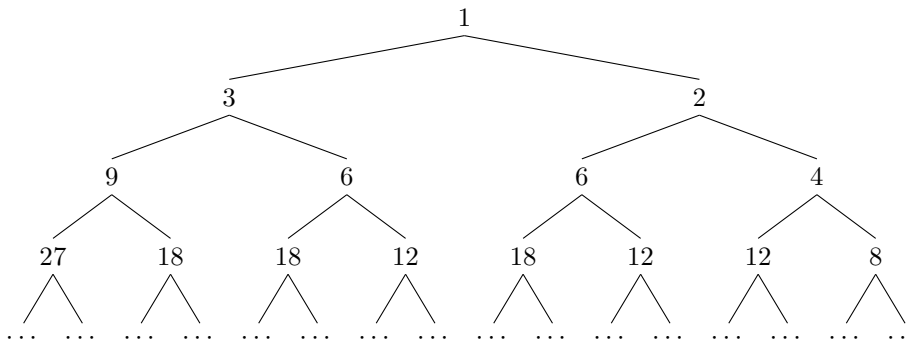
\begin{example}\label{infin}
 The system consisting of the arithmetic constraint 
  $\lozenge(t_1) = 1$
and the tree constraints \begin{align*}
r(t_1) \geq  2t_1, 
l(t_1) \geq 3t_1,
t_1 \geq lr(t_1),
\end{align*}
where the $\lozenge(\cdot)$ symbol denotes the variable in the root of a tree,
is unsatisfiable, because it implies 
$ 1  = \lozenge(t_1) \geq \lozenge(lr(t_1)) \geq 2\lozenge(l(t_1))\geq 6\lozenge(t_1)=6.$
The system 
$\lozenge(t_2) = 1, r(t_2)= 2t_2,
 l(t_2)= 3t_2$
has the solution in Figure \ref{tree}. 
The subtrees can be computed by
duplicating the value in the root of the subtree when going right and  
multiplying by three when going left:
\begin{align*}
 \forall w \in (l|r)^*: \lozenge(w(t_2)) = 2^i3^j, \text{ $i$ = number of $r$'s in $w$, $j$= number of $l$'s in $w$.}
\end{align*}
\end{example}

The following problems are closely related \cite{LPAR-21:Decidable_linear_list_constraints}.
\begin{mdframed}
\begin{itemize}
  \item \textbf{Skolem-Mahler-Lech Problem (SML)}

Given: A homogeneous linear recurrent sequence of degree $k$ with initial 
values 
$b_1,\dots,b_k$ and constant rational coefficients $a_1,\dots,a_k$ of the form 
\begin{align*}
 &x_n = a_1x_{n-1}+\dots+a_{k}x_{n-k}, n> k\\
 &x_1=b_1,\dots,x_{k} = b_{k}\\
 &a_i, b_i \in \mathbb{Q},b_i\geq 0 \text{ for all } i= 1,\dots,k, a_k \neq 0,
\end{align*}

Asked: Is there an index $n$ such that $x_n = 0$?

\item \textbf{List Constraint Satisfiability Problem (LC)}

Given: A finite system of list constraints (constraints over trees of degree 1 with label $tail$) and arithmetic constraints

Asked: Is there a set of lists, which simultaneously satisfies all constraints 
in of the system in $\mathbb{D} = \mathbb{Q}_0^+\cup \{\infty\}$?

\item \textbf{Tree Constraint Satisfiability Problem (TC)}

Given: A finite system of tree constraints and arithmetic constraints

Asked: Is there a set of trees, which simultaneously satisfies all constraints 
of the system in $\mathbb{D}$?
 \end{itemize}\end{mdframed}
 In \cite{LPAR-21:Decidable_linear_list_constraints}, it is shown that SML can be reduced to LC. 
 Thus LC and TC are very hard and probably undecidable problems; 
 at least the decidability status of the famous and NP-hard SML problem is still unknown.
 This led to the consideration of
 \emph{unilateral} constraints, that can be shown to be sufficient for our purposes and that are considerably easier to solve.
 
 \begin{definition}
  A unilateral tree constraint is a constraint with only one summand on the greater side of the inequality (i.e. of the form $t \geq te$ according to Figure \ref{syntaxgeneral}, 
  or equivalently $t \geq t_1 +\dots+ t_n$).
  We call unilateral tree constraints UTC.
 \end{definition}
 For instance, the satisfiable tree constraints in Example \ref{infin} are not unilateral.
 
It follows from the nonnegativity of the coefficients on the right hand side, that UTC is a proper fragment of TC.
 Indeed, it was shown that for the list case there exists a polynomial decision procedure by reduction to linear programming 
 \cite{LPAR-21:Decidable_linear_list_constraints}.
 This paper shows that UTC is also decidable. 
 In contrast to ULC (Unilateral List Constraints), our decision procedure is not polynomial in the size of the input.
 
 We also remark that according to the recurrence-like syntax nearly all constraints have only nonlinear solutions.
 For instance, in the case of lists, the linear system by Hofmann and Rodriguez can only (partially) treat periodic lists.
 In the tree case, we have analogous growth rates as for lists in \cite{LPAR-21:Decidable_linear_list_constraints}. 
 This means, as soon as we have a tree constraint with sums like e.g. $lrlr x\geq lrx+lrx$, the tree exhibits exponential growth.

\section{Decidability}\label{dec}
In this section we establish our main theorem, namely that satisfiability of unilateral linear tree constraints is decidable.
The proof is structured as follows: 
We observe that unsatisfiability is semi-decidable.
We show that we can reduce a set of constraint systems that contains all satisfiable ones to linear programming
using the following arguments:
\begin{itemize}
  \item We describe how to derive inequalities following from a set of constraints using a sound and complete proof system,
  \item characterize the set of trees greater than a fixed tree as a regular language,
  \item use these languages to find all trees bounded from above and from below,
  \item show that all other trees can be set to zero or infinity without changing the satisfiability properties of the system, and finally
  \item reduce the constraints to an equisatisfiable linear program.
 \end{itemize}
This means satisfiability is semi-decidable. Both together imply that satisfiability is decidable.

\subsection{Unsatisfiability is Semi-Decidable}
From now on we are in the realm of UTC and omit the word "unilateral".
 An \emph{unfolding step} for a constraint $x \geq S_1+\dots+S_n$, where $S_i$ are (sums of) tree variables,
 consists of adding the arithmetic 
 constraint $\lozenge(x) \geq \lozenge(S_1)+\dots+\lozenge(S_n)$
 and application of the (LabelSum) rule in Figure \ref{inf} to obtain the constraints for the next step.
 
 \begin{figure}[h]
\begin{mdframed}
\caption{Label application rules}
\label{inf}
$$\infer[\mathrm{(LabelSum)}]{TC  \vdash l(x) \geq \sum_i l(S_i)}{S= \sum_i S_i~~~~~~~~~~ TC  \vdash x\geq S}$$\\
$$\infer[\mathrm{(Root)}]{TC  \vdash \lozenge(x) \geq \sum_i \lozenge(S_i)}{S= \sum_i S_i~~~~~~~~~~ TC  \vdash x\geq S}$$
 \end{mdframed}
\end{figure}
 
 Each such step delivers a new, bigger set of arithmetic constraints that can be seen as a linear program.
 We have a succession of programs $(P_i)_{i\geq 0}$.
\begin{lemma}\label{contr}
 The constraint system $(AC,TC)$, where $AC$ is a set of arithmetic and $TC$ a setof tree constraints, 
 is unsatisfiable if and only if one of the linear programs $P_i$ is unsatisfiable.
 \end{lemma}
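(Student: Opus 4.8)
The plan is to prove the two directions separately, with the direction ``some $P_i$ infeasible $\Rightarrow$ $(AC,TC)$ unsatisfiable'' being a soundness argument and its converse resting on a compactness argument.

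For soundness I would first observe that both inference rules in Figure \ref{inf} are sound for the pointwise semantics: if concrete trees satisfy $t_x \geq \sum_i t_{S_i}$, then reading off the roots yields $\lozenge(t_x) \geq \sum_i \lozenge(t_{S_i})$ (rule (Root)), and passing to the $l$-subtrees yields $l(t_x) \geq \sum_i l(t_{S_i})$ (rule (LabelSum)). Since every arithmetic constraint occurring in $P_i$ is obtained from $(AC,TC)$ by finitely many applications of these rules followed by (Root), each such constraint is a logical consequence of $(AC,TC)$. Hence any solution of $(AC,TC)$, restricted to the finitely many node variables mentioned in $P_i$, is a feasible point of $P_i$; contrapositively, infeasibility of a single $P_i$ already witnesses unsatisfiability of $(AC,TC)$.

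For the converse I would argue the contrapositive: assuming every $P_i$ is feasible, construct a solution of $(AC,TC)$. The $P_i$ are nested, $P_0 \subseteq P_1 \subseteq \cdots$, and their union $P_\infty = \bigcup_i P_i$ contains, for every original tree constraint $x \geq \sum_i S_i$ and every word $w \in L^*$, exactly the instance $\lozenge(w(x)) \geq \sum_i \lozenge(w(S_i))$ (obtained by applying (LabelSum) $|w|$ times and then (Root)); together with $AC$ these are precisely the pointwise inequalities demanded by the semantics. Thus an assignment $\rho$ of the countably many node variables to values in $\mathbb{D}$ defines a solution of $(AC,TC)$ if and only if $\rho$ satisfies every constraint of $P_\infty$, so it suffices to produce a simultaneous solution of $P_\infty$. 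I would obtain it by compactness: endow $\overline{\mathbb{R}}_{\geq 0} = [0,\infty]$ with its (compact) order topology, so that the product $[0,\infty]^V$ over the countable variable set $V$ is compact by Tychonoff. Each linear constraint defines a closed subset there, and the solution set $Q_i$ of $P_i$ is a finite intersection of such closed sets, nonempty because $P_i$ is feasible. Nestedness gives $Q_0 \supseteq Q_1 \supseteq \cdots$, so $\{Q_i\}$ has the finite intersection property and compactness yields a point $\rho \in \bigcap_i Q_i$, i.e. a solution of $P_\infty$.

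The main obstacle I anticipate is this last compactness step, in two respects. First, the careful treatment of the value $\infty$: I must check that every constraint is genuinely closed in $[0,\infty]^V$, which hinges on the fact that our constraints only ever add nonnegative quantities and compare them, so the extended addition stays jointly continuous and no ill-defined $\infty - \infty$ arises. Second, and more seriously, compactness produces a solution with values in $[0,\infty]$ rather than in $\mathbb{D} = \mathbb{Q}_{\geq 0}\cup\{\infty\}$; I would address this by exploiting that each finite $P_i$ is a rational linear program and arguing that rational solvability persists in the limit for the homogeneous constraints arising from unfolding, or alternatively by appealing to the equisatisfiable linear program constructed later in Section \ref{dec}, which returns rational witnesses whenever the system is satisfiable. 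Everything else, namely soundness of the rules and the bookkeeping that $P_\infty$ captures exactly the pointwise semantics, is routine.
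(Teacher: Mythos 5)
Your proposal takes essentially the same route as the paper: the paper's entire proof is a one-line appeal to the compactness argument for infinite-dimensional 0-1 programming in \cite{Comp}, and your soundness-of-unfolding direction plus the Tychonoff/finite-intersection-property argument on $[0,\infty]^V$ is exactly that argument transplanted to the present setting. You in fact go beyond the paper by flagging the one delicate point --- that compactness yields values in $[0,\infty]$ rather than in $\mathbb{D}=\mathbb{Q}_0^+\cup\{\infty\}$ --- which the paper's citation glosses over entirely.
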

\begin{proof}
 The proof is basically the same as the compactness proof for infinite dimensional 0-1-programming in \cite{Comp}.
\end{proof}
Thus, if there is a contradiction, we find it, but if the system is satisfiable, this will not terminate.
In the remainder of this section, we give a procedure that terminates in the satisfiable case.

\subsection{The Set of Trees Greater than a Fixed Tree is a Regular Language}

We now describe the implications of a constraint system as given in Figure \ref{rules}.
Intuitively, if a constraint $x\geq y$ holds for trees $x$ and $y$, then also each subtree of $x$ is greater than or equal to the subtree of $y$ with the same label, 
and similarly the root of $x$ must be greater or equal to the root of $y$. Further, the greater-or-equal relation must be transitive.

\begin{figure}[H]\caption{Proof system for unilateral tree constraints}
 \label{rules}
$$\infer[\mathrm{(Reflexivity)}]
{TC \vdash ux \geq ux}{}$$ \\
$$\infer[\mathrm{(Label)}]{TC  \vdash l(x) \geq l(y)}{TC  \vdash x\geq y}$$\\
 $$\infer[\mathrm{(Transitivity)}]
 {TC \vdash ux \geq z}{x \geq y_1+...+y_n \in TC ~~~~~~~ TC \vdash uy_i \geq z}$$
\end{figure}

Tree expressions are of the form $ux$ where $u:\Sigma^*$ and $\Sigma$ is the set of tree labels like $\mathit{left,right}$, etc., and $x$ is a variable. 
We use letters $x,y,z$ for variables and for tree expressions.

The judgment $TC \models x \geq y$ , where $x$ and $y$ are expressions, has the
meaning that $x\geq y$ follows semantically from the
tree constraints in $TC$. That is, every valuation that satisfies $TC$ also
satisfies $x\geq y$.
The judgement $TC \vdash x \geq y$ means that the inequality $x \geq y$ is derivable by the rules in Figure \ref{rules}.
\begin{theorem}
 The proof system in Figure \ref{rules} is sound and complete (i.e. $TC \models x \geq y\Leftrightarrow TC \vdash x \geq y$).
\end{theorem}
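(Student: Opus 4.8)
The plan is to prove the two directions separately, with soundness routine and completeness carrying the real content.

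For soundness ($TC \vdash x \geq y \Rightarrow TC \models x \geq y$) I would proceed by induction on the derivation, checking that each rule preserves validity in an arbitrary model $\sigma$ of $TC$. (Reflexivity) is immediate. For (Label), selecting the $l$-subtree of both sides preserves pointwise domination, so $\sigma \models x \geq y$ gives $\sigma \models l(x) \geq l(y)$. For (Transitivity), starting from a constraint $x \geq y_1 + \dots + y_n \in TC$ I would apply the subtree selector $u$ to both sides — using that the $u$-subtree of a sum is the sum of the $u$-subtrees — to obtain $ux \geq \sum_i u y_i$ pointwise; nonnegativity of the labels in $\mathbb{D}$ then yields $ux \geq u y_i$, and composing with the induction hypothesis $u y_i \geq z$ gives $ux \geq z$.

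For completeness I would argue by contraposition: assuming $TC \not\vdash x_0 \geq y_0$, I construct a single model $\sigma$ of $TC$ that refutes $x_0 \geq y_0$. The key idea is a two-valued model over $\mathbb{D}$. Set $I := \{\, e : TC \vdash e \geq y_0 \,\}$, the set of all tree expressions provably above the fixed right-hand side, and define the valuation $f$ by assigning every expression in $I$ the value $\infty$ and every other expression the value $0$; reading this off, each variable $c$ is assigned the infinite tree $w \mapsto f(wc)$. By (Reflexivity) we have $y_0 \in I$, and by hypothesis $x_0 \notin I$, so already at the root $f(x_0) = 0 < \infty = f(y_0)$, which violates $x_0 \geq y_0$.

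It remains to verify that $f$ is indeed a model of $TC$, and this is the step I expect to be the crux. For every constraint $p \geq q_1 + \dots + q_n \in TC$ and every prefix $w \in \Sigma^*$ I must check $f(wp) \geq \sum_i f(w q_i)$. If no $w q_i$ lies in $I$ the right-hand side is $0$ and there is nothing to do; otherwise some $w q_i \in I$, i.e. $TC \vdash w q_i \geq y_0$, and a single application of (Transitivity) to $p \geq q_1 + \dots + q_n$ yields $TC \vdash wp \geq y_0$, so $wp \in I$ and $f(wp) = \infty$. In other words, closure of the provably-greater set $I$ under (Transitivity) is exactly what makes the $\{0,\infty\}$ assignment respect every unfolded inequality at every depth. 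The main obstacle is therefore not any single inequality but the \emph{simultaneity}: infinitely many instances, one per constraint and per prefix $w$, must all hold of one infinite-tree valuation, and the collapse to two values together with the syntactic closure of $I$ is precisely the device that reduces this infinitary condition to a single appeal to (Transitivity). Notably, (Label) plays no role in this countermodel, so it is the interaction of (Transitivity) with the nonnegativity of $\mathbb{D}$ that carries the argument.
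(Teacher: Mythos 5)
Your proof is correct and is essentially the paper's argument: the paper also builds a two-valued $\{0,\infty\}$ countermodel from a derivability/reachability closure and verifies it by exactly your case analysis, where closure under (Transitivity) forces the left-hand side to the large value whenever some summand has it. The only difference is a mirror image --- the paper fixes the left expression $x_0$ and assigns $0$ to everything provably below it (reachable in its constraint graph) and $\infty$ elsewhere, while you fix the right expression $y_0$ and assign $\infty$ to its upward closure $I$ and $0$ elsewhere; the two constructions are dual and equally valid.
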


\begin{proof} 
Soundness is trivial. 
For completeness we argue as follows.
Let $T$ be the set of all tree expressions over the variables in $TC$ and
define a graph $G=(V,E)$ where $V=T$ and 
\begin{align*}
E = &\{(x,y) | \text{ there is a
constraint } x' \geq y_1+\dots+y_n \in TC \text{ and }\\ &u:\Sigma^*.x=ux'\text{ and }
y=uy_i \text{ for some }i\}.
\end{align*}
Now fix a tree expression $x_0$ and define a valuation
$\eta$ in such a way that $\eta(\lozenge(z)) = 0$ if $z$ is reachable from $x_0$ in $G$
and $\eta(\lozenge(z))=\infty$ otherwise. We claim that $\eta$ satisfies $TC$. Indeed,
suppose that $x \geq y_1 +\dots+ y_n$ is a constraint in $TC$. We must show that
$\eta(\lozenge(ux)) \geq \eta(\lozenge(uy_1))+\dots+\eta(\lozenge(uy_n))$ holds for all $u:\Sigma^*$. 
Now if $ux$ is unreachable from $x_0$ then $\eta(\lozenge(ux))=\infty$ and the inequality
holds. On the other hand, if $ux$ is reachable then $uy_1,\dots,uy_n$ are also
reachable and the inequality holds as well. 

Now suppose that $x_0\geq y$ is an inequality that is not derivable from $TC$. In this case, $y$ is not
reachable from $x_0$ in $G$. The valuation $\eta$ constructed above then
satisfies $TC$ yet $\eta(x_0)=0$ and $\eta(y)=\infty$ so $x_0\geq y$ is not a semantic
consequence of $TC$.
\end{proof}
As a next step, we are interested in the set $L_z^{\geq}$ of tree expressions greater or equal to a fixed tree expression $z$; in short all $x$ such that $TC \vdash  x\geq z$.
Let us define the language $L_{x,y}\coloneqq \{u\mid TC\vdash ux \geq y\}$ with $x$ and $y$ fixed tree variables 
as an auxiliary step to compute $L_{z}^{\geq}$.

\begin{theorem}\label{regular}
 The language $L_{x,y}$ is regular.
\end{theorem}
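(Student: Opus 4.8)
The plan is to exhibit a finite automaton (or equivalently a right-linear grammar) recognizing $L_{x,y} = \{u \mid TC \vdash ux \geq y\}$. The key observation is that by the completeness theorem just proven, membership $u \in L_{x,y}$ is governed entirely by the derivability relation $TC \vdash ux \geq y$, and the only rules that can produce a judgement of the form $ux \geq y$ are (Reflexivity), (Label), and (Transitivity). I would first analyze the shape of derivations: a derivation of $TC \vdash ux \geq y$ must, reading from the root $ux$ downward, alternate between stripping common prefixes (via the structure of the constraints used in Transitivity) and eventually bottoming out either at a reflexivity axiom $vy \geq vy$ or at a direct application of a constraint. The crucial point is that the information that needs to be ``remembered'' while scanning $u$ from left to right is bounded.

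Concretely, I would define states of the automaton to be the tree \emph{variables} occurring in $TC$ (a finite set, since $TC$ is finite), together with possibly a dedicated accepting configuration. The intended invariant is that after reading a prefix of $u$, the automaton is in a state corresponding to the variable that the prefix has ``reduced to.'' The transitions encode single rule applications: a (Label) step corresponds to reading one label symbol $l \in \Sigma$ and staying within the same variable context, while a (Transitivity) step using a constraint $x' \geq y_1 + \dots + y_n \in TC$ lets the automaton nondeterministically move from the state for $x'$ to the state for some $y_i$ without consuming input (an $\varepsilon$-transition). The start state is $x$, and a string $u$ is accepted exactly when, after reading $u$ and possibly taking $\varepsilon$-transitions through constraints, the automaton reaches the state $y$ via reflexivity. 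Since the alphabet $\Sigma$, the variable set, and the constraint set are all finite, this is a finite (nondeterministic, with $\varepsilon$-moves) automaton, so its language is regular.

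The main obstacle, and the step requiring the most care, is proving that this automaton recognizes \emph{exactly} $L_{x,y}$ — that is, matching the inductive structure of proof-system derivations to automaton runs in both directions. The subtle direction is soundness of the construction: I must show that every derivation of $TC \vdash ux \geq y$ can be reorganized into a canonical form whose ``backbone'' reads off precisely the symbols of $u$ interleaved with constraint applications, so that the branching introduced by the sums $y_1 + \dots + y_n$ in Transitivity collapses to a single path (a single $y_i$) for the purpose of recognizing $u$. I would handle this by induction on the height of the derivation, showing that the common prefix $u$ threaded through the rules is built up letter-by-letter exactly as the automaton consumes it, and conversely that any accepting run can be replayed as a valid derivation using the corresponding rule instances. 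Once this correspondence is established, regularity of $L_{x,y}$ follows immediately from Kleene's theorem.
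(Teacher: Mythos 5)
Your construction does not work, and the place where it breaks is exactly the difficulty this theorem is about. You take the states to be the tree \emph{variables} and model a use of a constraint $x' \geq y_1 + \dots + y_n$ as an $\varepsilon$-move between variable states. But constraints in $TC$ relate tree \emph{expressions}, not bare variables: a constraint has the form $vx' \geq v_1y_1 + \dots + v_ny_n$ with label words $v,v_i$ attached (e.g. $lr(x) \geq rr(x)$, or $lrx \geq lrlrx$ from the paper's pumping-lemma discussion), and your transitions simply ignore these words. Moreover, in a derivation of $TC \vdash ux \geq y$ transitivity rewrites the expression at its \emph{variable} end: $u''\cdot v\,x' \geq y$ is reduced to $u''\cdot v_i\,y_i \geq y$, so intermediate expressions can grow and shrink without bound (with $lrx \geq lrlrx$ a derivation may pass through $(lr)^k x$ for arbitrarily large $k$). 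So the ``information to be remembered'' is \emph{not} bounded by the variable set, and your invariant that a prefix of $u$ ``reduces to'' a variable has no counterpart in the proof system, since rewriting happens at the suffix, not the prefix. Concretely, your automaton accepts wrong languages: with the single constraint $lr(x) \geq l(y)$ one has $L_{x,y} = \emptyset$, yet your automaton has an $\varepsilon$-move from state $x$ to state $y$ and accepts $\epsilon$; with the single constraint $x \geq y$ one has $L_{x,y} = \{\epsilon\}$, yet your letter-reading self-loops (which correspond to no rule --- (Label) prefixes \emph{both} sides and can never strip a letter from only the left-hand side while the target $y$ stays fixed) make the automaton accept $l$ as well.

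The missing idea is a stack. Because derivations rewrite the suffix of the label word adjacent to the variable, derivability is a pushdown process: the paper first builds a stack automaton whose stack holds the label word and whose states include the variables, where applying a constraint pops $v$ and pushes some $v_i$ while moving from the state for $x'$ to the state for $y_i$; then $u \in L_{x,y}$ iff this automaton accepts starting from stack content $ux$. Regularity follows not from a direct bounded-memory argument but from a saturation step: the paper computes by dynamic programming the relation $Q=\{(p,q) \mid p \rightarrow^* q \text{ with empty net stack effect}\}$ (Figure \ref{inf1}) and reads off a finite automaton whose moves are $\varepsilon$-transitions for pairs in $Q$ together with letter-consuming moves coming from the pop transitions (Figure \ref{inf2}); this is the classical fact that the backward-reachability set of a pushdown system is regular. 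The paper also observes that the two-sided analogue of this language (where the label word on $y$ is not fixed) is \emph{not} regular, which is a warning sign that regularity here is a delicate consequence of fixing $y$ and cannot be obtained by a memoryless simulation of the rules on variable states of the kind you propose.
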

\begin{proof}
We construct a finite automaton that accepts a word $w$, if and only if $TC \vdash wx\geq y$.

With the proof system in Figure \ref{rules}, we can first build a stack automaton $\mathcal{A}$ from $TC$, 
that reads no input and such that $u:L_{x,y}$ if and only if
 $\mathcal{A}$ accepts beginning from stack $ux$.
 
We give the idea for the construction of a slightly more general stack automaton, namely 
a stack automaton that accepts a word $vxyw^r$ if and only if the constraints imply $vx\geq wy$.
Acceptance is by empty stack, and we start by writing $v$ on the stack  while we are in a so-called "write-state",
 then go into a  state named $"x"$, there modify it nondeterministically and without reading from the input, as the constraints describe 
 (possibly going to state $"y"$ for another variable $y$).
 After that, we leave state $"y"$ (or $"x"$) and go into a "compare-state" where we
 compare the obtained stack with $w$ and empty it if they both are equal.
\begin{example}
Consider the constraints
 \begin{align*}
  lr(x) &\geq  rr(x),\\
    lr(x)&\geq l(y).
 \end{align*}
The stack automaton $M = (Z, \Sigma, \Gamma, \delta, z_0, \#)$
such that 
\begin{align*}
 Z = \{z_0,z_{\infty},z_x,z_y,z',z''\},\Sigma = \{ l,r,x,y\},\Gamma = \{L,R,\#\}
\end{align*}
and the transition relation $\delta$ is defined as depicted in Figure \ref{automat}. Note that the lower case 
input symbols in $\Sigma$ correspond to the according upper case letters in the stack alphabet $\Gamma$.
Here $z_0$ is the write-state, $z_{\infty}$ the compare-state and $z',z''$ are auxiliary states. 
For instance, in Figure \ref{automat}, the auxiliary states are used as intermediate steps to rewrite $lr(x)$ to $rr(x)$ or to $l(y)$.
We use the usual notation 
with triples for the current state, the read input symbol and the stack content, that are then mapped to the next state and the new stack content by $\delta$. 
In the picture, the triples on the arrows mean the input symbol, the stack before and the stack after the transition.
The symbol $a \in \Sigma$
\begin{figure}
\caption{Example stack automaton}
\begin{minipage}[right]{.5\textwidth}
 \label{automat}
 \begin{align*}
\delta(z_0,a,B) &= (z_0,AB),\\
             \delta(z_0,x,B) &= (z_x,B),\\
                 \delta(z_0,y,B) &= (z_y,B),\\
                 \delta(z_x, \epsilon, R) &=(z',\epsilon),\\
		\delta(z' \epsilon, L) &= (z'',\epsilon),\\
		\delta(z'', \epsilon, B)&= \{(z_x,RRB),(z_y,LB)\},\\
		\delta(z_x, x, B) &= (z_{\infty},B),\\
		\delta(z_y, y, B) &= (z_{\infty},B),\\
		\delta(z_{\infty}, r, R) &= (z_{\infty},\epsilon),\\
		\delta(z_{\infty}, l, L) &= (z_{\infty},\epsilon),\\
		\delta(z_{\infty}, \epsilon ,\#) &= (z_{\infty},\epsilon),\\
		B \in\Gamma^*,a &\in\{l,r\}.
\end{align*}
\end{minipage}
\begin{minipage}[left]{.5\textwidth}
\includegraphics[scale=.3]{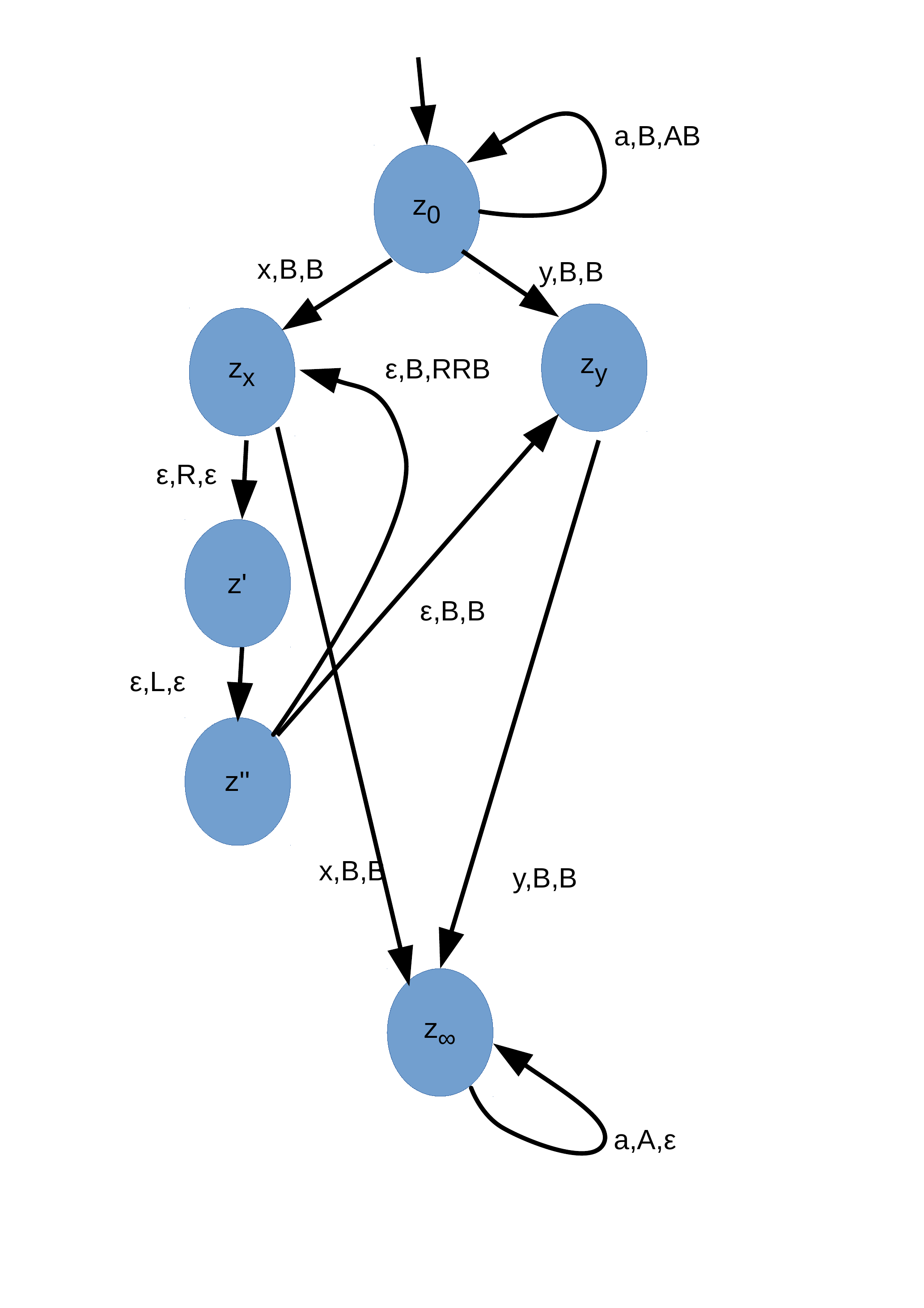}
\end{minipage}
\end{figure}
\end{example}
The language $L$ corresponding to this generalization is not always a regular language:
this can be seen with the Pumping Lemma. Assume $L$ is regular and let $p$ be the Pumping Lemma number and let the constraints be $lrx\geq lrlrx$ and consider the word $(lr)^ixx(rl)^{i+1}$, which is 
implied by the constraints and thus in $L$.
Then the word $\alpha = (lr)^p$ is obviously longer than $p$ and for each division of $\alpha= uvw$ in three words $u,v,w$,
for  instance with $v = lr$, then holds e.g. $\forall k. (lr)^{i+k}xx(rl)^{i+1} \in L$, 
which is not implied by the constraints. In any case we have that the label word before the $x$'s is longer than the label word after them, 
which is not a consequence of the constraint. This is a contradiction.

But we can use $L$, however, to show that the language $L_{x,y}$ with the second variable $y$ and its label word and $x$ fixed is always 
regular\footnote{In the list case, where we have only one label, this is a direct consequence of 
Parikh's theorem.}. For that, we build the above mentioned stack automaton $\mathcal{A}$ similar to the construction above. 
We can assume w.l.o.g. (possibly by introducing new states), that $\mathcal{A}$ has only transitions of the form 
$x\rightarrow y, x\stackrel{pop(a)}{\longrightarrow}y$, or $x\stackrel{push(a)}{\longrightarrow}y$, with $x,y$ states that belong to variables. 
We then define the set \begin{align*}
Q=\{(x,y) | x\rightarrow^*y\}
\end{align*}
and enumerate it using dynamic programming and the rules in Figure \ref{inf1}. 
Now we obtain for $L_{x,y}$, with $x,y$ both variables, the representation in Figure \ref{inf2}.
\begin{figure}[h]
\caption{Rules for Q}\label{inf1}
$$\infer
{(x,y):Q}{x\rightarrow y}$$\\
 $$\infer
 {(x,w):Q}{x\stackrel{push(a)}{\longrightarrow}y ~~~~~~~~~~(y,z):Q~~~~~~~~~z\stackrel{pop(a)}{\longrightarrow}w}$$
\end{figure}
\begin{figure}
\caption{Rules for $L_{x,y}$}\label{inf2}
$$\infer{\epsilon:L_{x,y}}{(x,y):Q}$$\\
$$\infer{ua:L_{x,y}}{u:L_{x''',y} ~~~~~~~~~~(x'',x'''):Q~~~~~~~~~x'\stackrel{pop(a)}{\longrightarrow}x''~~~~~~~~~(x,x'):Q}$$
\end{figure}
From this we can read off a finite automaton $\mathcal{B}$ for $L_{x,y}$ directly: the states are the states of $\mathcal{A}$, for each pair
$(x,y) \in Q$ we introduce an $\epsilon$-move from state $x$ to state $y$, and the $\mathcal{A}$-transitions $x'\stackrel{pop(a)}{\longrightarrow}x''$ 
are the nontrivial moves that consume the letter $a$.
\end{proof}

We then also have $L_{ux,y} = \{w \mid wu : L_{x,y}\},L_{x,vy} $  and $L_{ux,vy}$ regular. 
The disjoint union over the sets $L_{x,vy}$ for all $x$ equals $L_z^{\geq}$, for the expression $z = vy$.

From now on we omit the brackets for trees with prefixed labels and write $lx$ instead of $l(x)$.

\begin{example}
 Let the constraints be
 \begin{align*}
  lx \geq x,
  x \geq rz,
  lrz \geq lly,
  ly \geq y.
 \end{align*}                                                                                                                                         
Then $L_{x,y}=L_{x,ly}= l^+$ and $L_{lx,y} = l^*$.
The language $L_{lrz}^{\leq} = \{lr_z,ll_y,l_y,\epsilon_y\}$, where the subscript $x$ at label word $w$ means that $w \in L_{x,lrz}$.
\end{example}

Tree constraints systems without arithmetic constraints are always trivially satisfiable by setting all tree entries to zero.
Analogously, all nodes that have bounds only in one direction (i.e. are only implied to be greater than a set 
of arithmetic variables $a_i$ or only less) can be set to zero or infinity.
The only interesting case appears when we have subtrees $x$ whose root $\lozenge(x)$ lies between two arithmetic variables $a$ and $b$.
The set $L_a^{\geq}\cap L_b^{\leq} \coloneqq \{x\mid a \leq \lozenge(x) \leq b\}$ can be computed using the languages
$L_x^{\geq}$ and $L_y^{\leq}$ for certain subtrees $x,y$.
These trees are defined as the subtree starting at the point where the arithmetic variables $a,b$ are located.
For instance, if $a = \lozenge(lrrz)$, then the subtree $x$ is $lrrz$.
Thus we can write $L_a^{\geq}\cap L_b^{\leq} = \{x\mid TC \vdash y_a \leq x \leq y_b \}$, 
where $y_a$ (resp. $y_b$) is the tree with root $\lozenge(y_a) = a$ (resp. $b$).  
\begin{example}\label{irregular}
 Consider the constraints
 \begin{align*}
  \lozenge(x) &= 1,
lrx \geq x, lx \geq x, mx \geq x,
x \geq rlx, x \geq mlx.
\end{align*}
The language of trees greater with root greater than the root of $x$, less than the root of $x$, and equal to it are: 
\begin{align*} 
 L_{\lozenge(x)}^{\geq} &=(lr\mid l \mid m)^*x,
 L_{\lozenge(x)}^{\leq} = (ml\mid rl)^*x,\\ L_{\lozenge(x)}^{=} &=L_{\lozenge(x)}^{\leq}\cap L_{\lozenge(x)}^{\geq} = m(lr)^*l x= ml(rl)^*x. 
\end{align*}
This is obtained by iteratively applying the constraints and transitivity.
\end{example}

\subsection{Normal Form for Tree Constraints}
We now bring the constraints into a normal form to start our procedure. 
Constraints in this normal form all have a variable with label word of length $n$ on the left hand side, and all label words on the right are
at most of length $n$. The variables with label word of length exactly $n$ can be represented as a directed acyclic graph with an edge between $x$ and $y$ if and only if
$TC \vdash x \leq y$.
Further, there are no arithmetic constraints below level $n-1$ (i.e. for trees with label word of length more than $n-1$).

To obtain constraints in this normal form, we examine the form of each constraint. 
If there is only one label word of maximal length, we take this word and isolate it on the left side of the inequality.
If there is more than one such word, we write $k$-times repeated addition of the same summand $s$ as $k\cdot s$.
If then there is only one summand $s$ of maximal length, we bring it on the left hand side and divide both sides by $k$.
Otherwise, we  build $l$ constraints by bringing the $l$'th of the longest summands on  the left (possibly again by dividing by a positive integer).
Then, we apply the rules in Figure \ref{inf} to all thus obtained constraints until all label words on the left have the same length. 
The result is an equivalent constraint system (i.e. a system with exactly the same 
solutions) consisting of unfolded tree constraints and a new, bigger set of arithmetic constraints.

Depending on the position of this longest label word and according to the unilateral syntax, we have --- after bringing the longest label word on the left side --- 
three kinds of constraints:
\begin{itemize}
 \item \emph{lower bounds} are of the form $wx \geq w_1y_1+\dots+w_iy_i$, with all $w_i$ shorter than $w$.
 \item \emph{upper bounds} have the shape $wx \leq w'y- w_1y_1-\dots-w_iy_i$, with all $w_i$ and $w'$ shorter than $w$.
 \item \emph{undirected constraints} have two label words of the same length on both sides, as for instance $x \geq y+z, x \leq y-z$.
\end{itemize}

The last set can be transformed into a directed acyclic graph by removing cycles as follows:
If we can derive $x\geq y_1\geq y_2\geq \dots \geq x+R$ by just using transitivity 
(not label application, which makes it immediate to decide), then we conclude that $x = y_i \forall i$ and that $R$ is identical 
to the tree consisting only of zeros.
In this graph, we have now encoded upper and lower bounds simultaneously.

\begin{example}\label{exampl}
 Let the constraints be 
  $x \geq y+z, z\geq t,y \geq t. $
 They correspond to the graph with an edge from $x$ to $y$ and to $z$ and from $y$ and $z$ to $t$, shown
 in Figure \ref{dag}.
 \begin{figure}[h]
 \caption{Graph for Example \ref{exampl}}
 \begin{center}\label{dag}\begin{tikzpicture}
                   \node (x) {x};
                   \node (y) [below of = x, left of = x]{y};
                   \node (z) [below of = x, right of = x]{z};
                   \node (t) [below of = x, left of = z]{t};
                   \path[->] (x) edge (y);\path[->] (x) edge(z);\path[->] (y)edge (t);\path[->] (z) edge(t);
                   \end{tikzpicture}\end{center}\end{figure}
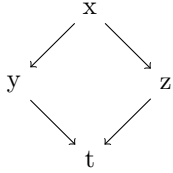
Then we add the four constraints $t \leq z,t\leq y, y \leq x-z, z \leq x-y$
to our system. We must traverse this graph in two directions (i.e. we need both kinds of bounds that are implied by it) 
to obtain an order in which we treat the nodes.
Why we need this, will become clearer in Example \ref{why}.
\end{example}

\subsection{Idea and Examples}

Before moving to Example \ref{why}, we will briefly explain the idea behind our procedure.
Afterwards, we will cover the technical details.
The intuition is that we label all nodes in the trees which are in a set $L_a^{\geq}\cap L_b^{\leq}$ for arithmetic variables $a$ and $b$ with sets of intervals,
in which the number in the node has to lie. 
These intervals are derived from the constraints.
Nodes with the same set of intervals are defined to  be in the same \emph{class}.
Then we show that in a subset of all constraint systems that contains the satisfiable ones, there are only finitely many different classes.
Last, we translate the statement that all these intervals are nonempty into a finite set of linear inequalities between the arithmetic variables.
This linear program is equisatisfiable to the constraints (i.e. if the intervals are nonempty, then there exists a solution with the valuation of each
$\lozenge(x)$ in the interval assigned to $\lozenge(x)$). 
If they are satisfiable, 
we thus get the answer in terms of a satisfiable linear program, and, in addition to that,
an assignment of a class to each of the nodes that can be seen as a certificate for satisfiability. 
Combining this with Lemma \ref{contr}, we have a decision procedure that either returns an unsatisfiable linear program implied by the constraints or a schematic notation
for the intervals that contain their solutions from which it is directly possible to compute a solution. 
\begin{example}\label{why}
Consider the (list-) constraints
\begin{align*}
\lozenge(x)=\lozenge(y) =1,
 x \geq y,
 lx \leq x,
 ly \geq y.
\end{align*}
They are equivalent to the system $\lozenge(x)= \lozenge(y)=1$ in conjunction with $C= \{c_1, c_2,c_3,c_4\}$, where
\begin{align*}
 c_1 \coloneqq ly \leq lx,
  c_2 \coloneqq lx \geq ly,
  c_3 \coloneqq lx \leq x,
 c_4 \coloneqq ly \geq y,
 \end{align*}
which is in normal form (with the constraint $c_1$ redundant in this case).
The constraints $c_3$ (resp. $c_4$ deliver the intervals $[0,1]$ for $\lozenge(lx)$ (resp. $[1,\infty ]$ for $\lozenge(ly)$). Then $c_1$ gives
us the interval $[1,1]$ for $\lozenge(ly)$ and $c_2$ gives the same interval for $\lozenge(lx)$. 
By duplicating the constraint, we ensure that we treat the nodes in subsequent 
levels of the DAG (as constructed above) correctly. Imagine we had only $c_1$ without $c_2$, then we would miss the bound on $\lozenge(ly)$.
In the next steps, we derive no new bounds any more.
\end{example}
The system in Example \ref{why} is satisfiable by the trees consisting only of 1s.
If we modify it slightly, it becomes unsatisfiable:
\begin{example}\label{why2}
 The constraints
$x \geq y, \lozenge(y) \geq 1,
 lx+lx \leq x,
 ly \geq y+y$
are equivalent to $\lozenge(x)\geq \lozenge(y)\geq 1$ and $D= \{d_1, d_2,d_3,d_4\}$, where
\begin{align*}
 d_1 \coloneqq ly \leq lx,
 d_2 \coloneqq lx \geq ly,
 d_3 \coloneqq lx \leq 0.5 x,
 d_4 \coloneqq ly \geq 2y.
\end{align*}
We have the intervals $[0, 0.5\lozenge(x)]$ and $[2\lozenge(y),\infty]$ for $\lozenge(lx)$ by $d_3$ and $d_2$, and $[2\lozenge(y),\infty]$ and 
$[0,0.5\lozenge(x)]$ for $\lozenge(ly)$ by $d_4$ and $d_1$. In the next steps, the factors will be 0.25 and 4, etc.
The list $x$ is exponentially decreasing, whereas $y$ grows exponentially. So if the roots of $x$ and $y$ are neither zero nor infinity, 
then no matter which number they are, $x$ will at some point be less than $y$.
Here we see that there is a contradiction, but we can not say after how many iterations we will find it. There the other part of our algorithm, namely Lemma 
\ref{contr}, applies. 
\end{example}
\subsection{Satisfiability is Decidable}
Before we prove the central fact (Theorem \ref{thm}) of this paper, we need some word-combinatorial preliminaries.
We say that two label words are \emph{dependent} if one is a suffix of the other.
The next two lemmas are well known and can be found for instance in \cite{CK1997}. 
\begin{lemma}\label{words}
 Let $u,s,t \in \Sigma^+$ such that $tu=st$. 
 There then exist $q,r\in\Sigma^*$ and $i\in\mathbb{N}$ such that $s=qr$, $u=rq$, $t=q(rq)^i$.
\end{lemma}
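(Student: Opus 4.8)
The plan is to recognize this as the classical conjugacy characterization of Lyndon and Schützenberger and to prove it by induction on $|t|$. First I would record the trivial length bookkeeping: since $tu=st$ holds between words, we have $|t|+|u|=|s|+|t|$, hence $|u|=|s|$. As a sanity check on the target decomposition I would verify the converse, namely that if $s=qr$, $u=rq$ and $t=q(rq)^i$, then both sides of the claimed identity collapse to the same word, $tu=q(rq)^i\,rq=q(rq)^{i+1}$ and $st=qr\,q(rq)^i=q(rq)^{i+1}$; this confirms that $s=qr$, $u=rq$, $t=q(rq)^i$ is indeed the right shape to aim for.

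For the forward direction I would compare $|t|$ with $|s|$. Reading the common prefix of length $\min(|t|,|s|)$ off both sides of the word equation $tu=st$ shows that the shorter of $t,s$ is a prefix of the longer. In the base case $|t|\le|s|$, the word $t$ is a prefix of $s$, so I write $s=ts''$ with $s''\in\Sigma^*$; substituting into $st=tu$ and cancelling the leading $t$ gives $s''t=u$, and then $q=t$, $r=s''$, $i=0$ yield $s=qr$, $u=rq$ and $t=q=q(rq)^0$. This single case covers both $|t|<|s|$ (where $s''\neq\epsilon$) and $|t|=|s|$ (where $s''=\epsilon$, forcing $t=s=u$). In the inductive step $|t|>|s|$, the word $s$ is a prefix of $t$, so I write $t=st'$ with $t'\in\Sigma^+$ nonempty since $|t'|=|t|-|s|\ge 1$; substituting into $st=tu$ and cancelling the leading $s$ gives $t'u=st'$, an instance of the same equation with $t'$ in place of $t$ and strictly smaller length. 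Since $s,t',u$ all lie in $\Sigma^+$, the induction hypothesis applies and produces $q,r$ and $j$ with $s=qr$, $u=rq$, $t'=q(rq)^j$, whence $t=st'=qr\,q(rq)^j=q(rq)^{j+1}$, and the decomposition holds with $i=j+1$.

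The main obstacle I anticipate is purely the empty-word bookkeeping at the boundary: one must be careful that $q,r\in\Sigma^*$ are allowed to be empty (so that the degenerate case $t=s=u$ is subsumed under $i=0$, $r=\epsilon$), while simultaneously checking that in the inductive step the reduced datum $t'$ is genuinely nonempty so that the reduced equation $t'u=st'$ stays inside the hypothesis class $s,t',u\in\Sigma^+$ and the induction hypothesis is legitimately applicable. Apart from verifying that the induction measure $|t|$ strictly decreases (which it does, by $|s|\ge 1$), everything else is routine prefix-cancellation.
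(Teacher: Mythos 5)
Your proof is correct. There is, however, no in-paper argument to compare it against: the paper explicitly declines to prove this lemma, quoting it as a well-known fact with a pointer to the combinatorics-of-words literature \cite{CK1997}. What you have supplied is the standard textbook induction on $|t|$ (the Lyndon--Sch\"utzenberger conjugacy characterization), and it is carried out correctly: reading the common prefix of $tu=st$ splits the analysis cleanly into the base case $|t|\le|s|$, where $t$ is a prefix of $s$ and $q=t$, $r=s''$, $i=0$ works, and the inductive step $|t|>|s|$, where cancelling the leading $s$ from $st=tu$ produces the strictly smaller instance $t'u=st'$ with $t'\in\Sigma^+$ nonempty, so the induction hypothesis legitimately applies and $t=st'=qr\,q(rq)^j=q(rq)^{j+1}$. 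One point worth making explicit: your base case genuinely needs $i=0$ to be admissible, so the lemma as stated only holds under the convention $0\in\mathbb{N}$. This is forced by the statement rather than a defect of your proof: for $t=a$, $s=ab$, $u=ba$ one has $tu=st=aba$, yet any decomposition with $i\ge 1$ would give $|t|\ge|q|+|rq|\ge 2$, a contradiction. With that reading fixed, your argument is a complete, self-contained replacement for the paper's citation.
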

\begin{lemma} \label{words3}
 If for word $x,y,z$ holds
$x^ny^m=z^k$, with
$n, m, k \geq 2,$
then exists $t$ such that $x,y,z  \in t^*$.
\end{lemma}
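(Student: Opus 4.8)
The statement to prove is Lemma~\ref{words3}: if $x^n y^m = z^k$ with $n,m,k \geq 2$, then there exists a word $t$ with $x,y,z \in t^*$.

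\textbf{Plan.} The plan is to exploit the classical theory of the combinatorics of words, in particular the \emph{theorem of Fine and Wilf} and the characterization of when two words commute. The key structural fact I would use is that two words $u,v$ satisfy $uv = vu$ if and only if they are both powers of a common word $t$, i.e. $u,v \in t^*$. So the entire task reduces to producing, from the hypothesis $x^n y^m = z^k$, a pair of commuting words among $x,y,z$, or more directly to showing that $x$ and $y$ commute; then $t$ is their common root (the primitive root of $xy$), and since $z^k = x^n y^m \in t^*$ with $t$ primitive, $z \in t^*$ as well.

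\textbf{Key steps in order.} First I would observe that $z^k = x^n y^m$ means the single word $w := x^n y^m$ is a $k$-th power with $k \geq 2$, hence $w$ is not primitive and has a period $|w|/k$. Second, I would set up a Fine–Wilf style argument: the word $w$ can be read both as $x^n y^m$ and as $z^k$, so it carries two periodicities, namely period $|x|$ on its prefix $x^n$ (of length $n|x| \geq 2|x|$) and period $|z|$ globally. The aim is to force $x$ and $z$ to share a common root by comparing overlaps. Third, I would apply Lemma~\ref{words} (the Lyndon–Schützenberger conjugacy lemma already available in the excerpt) to the overlap equations that arise when $x$-blocks and $z$-blocks are aligned: whenever we get an equation of the form $tu = st$ we extract the common-conjugate decomposition $s = qr$, $u = rq$, $t = q(rq)^i$, which pushes everything into powers of a single word. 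Concretely, because $n \geq 2$ and $k \geq 2$, there is enough length to guarantee a nontrivial overlap where a suffix/prefix relation $tu=st$ is forced, feeding Lemma~\ref{words}. Iterating (or taking the longest common root) then yields a single $t$ with $x,z \in t^*$, and symmetrically on the $y^m$ side one gets $y \in t^*$ for the \emph{same} $t$ once $z \in t^*$ is known, since $y^m = z^{-n\cdot(\ldots)}\!\cdots$ is a factor of a power of $t$; here I would use primitivity of the root to conclude $y \in t^*$.

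\textbf{Main obstacle.} I expect the delicate point to be the alignment/overlap bookkeeping: showing that the two readings $x^ny^m$ and $z^k$ actually produce an equation of the exact shape $tu = st$ required by Lemma~\ref{words}, rather than a messier partial-overlap relation. The cleanest route is probably to invoke Fine and Wilf directly: since $w = x^n y^m = z^k$ has length $\geq |x| + |z|$ (in fact much larger, as $n,m,k \geq 2$ give ample slack beyond the Fine–Wilf bound $|x|+|z|-\gcd(|x|,|z|)$), the two periods $|x|$ and $|z|$ on the common region force a period of $\gcd(|x|,|z|)$, whence $x$ and $z$ are powers of a common word $t$; the symmetric computation on the $y^m$ block gives $y \in t^*$. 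The only care needed is checking that the overlap lengths genuinely exceed the Fine–Wilf threshold, which the hypotheses $n,m,k \geq 2$ are designed to guarantee. I would therefore present the proof as a short reduction to Fine–Wilf plus Lemma~\ref{words}, keeping the length estimates explicit but the case analysis minimal.
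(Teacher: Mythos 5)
There is a genuine gap, and it sits exactly where your sketch waves it away. First, for context: the paper does not prove Lemma~\ref{words3} at all --- it is the Lyndon--Sch\"utzenberger theorem, quoted as well known with a pointer to \cite{CK1997} --- so your attempt is competing against the classical proof of that theorem. In your reduction, the period $|x|$ is available only on the block $x^n$ (and $|y|$ only on $y^m$), whereas the period $|z|$ is global; hence Fine--Wilf can only be applied to one block at a time, and it needs that block to be long enough: $n|x| \geq |x|+|z|-\gcd(|x|,|z|)$, respectively $m|y| \geq |y|+|z|-\gcd(|y|,|z|)$. The hypotheses $n,m,k \geq 2$ do \emph{not} guarantee either inequality, contrary to your closing claim. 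Concretely, take $n=m=k=2$ (so $|x|+|y|=|z|$) with $|x|=5$, $|y|=4$, $|z|=9$: the $x$-block has length $10 < 5+9-1=13$, and the $y$-block has length $8 < 4+9-1=12$, so the Fine--Wilf threshold is missed on both sides simultaneously. (The conclusion of the lemma is of course still true for these lengths --- it forces all three words to be powers of a single letter --- but Fine--Wilf cannot be the tool that proves it.) The same evasion occurs for $k=3$, e.g.\ $n=m=2$, $|x|=13$, $|y|=11$, $|z|=16$.

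This is not a repairable bookkeeping issue: the situation where the overlaps are too short, especially $k=2$, is precisely the hard case of Lyndon--Sch\"utzenberger, and the standard proofs (including the treatment in \cite{CK1997}) handle it by a separate combinatorial argument --- writing $z = x^j u$ with $u$ a proper prefix of $x$, exploiting primitivity and the conjugacy lemma (your Lemma~\ref{words}), and performing a case analysis on how the $z$-boundary falls inside $x^n$ --- not by a length-threshold application of Fine--Wilf. The auxiliary facts you invoke are all correct (two words commute iff they share a root; if $t$ is primitive and $z^k \in t^*$ then $z \in t^*$; Lemma~\ref{words} as stated), but the engine that is supposed to produce a commuting pair is missing in exactly the cases that matter. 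Since the paper itself treats this lemma as a black box from the literature, the honest fix is either to do likewise and cite it, or to reproduce the full Lyndon--Sch\"utzenberger case analysis; the sketch as written does not prove the lemma.
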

 
\begin{lemma}\label{words2}
 Let $c$ be a unilateral tree constraint.
 If it is of the form
 \begin{align}\label{summe}
      a_1a_2\dots a_n x \geq c_1 \cdot a_2\dots a_n x +\dots+c_n \cdot a_n x+ c_{n+1}\cdot x
     \end{align}
with all $a_i \neq \epsilon,c_j \in\mathbb{N}_0$ and label words and $x$ a tree,
then it can be transformed (by application of labels from the left) into a constraint with all $a_k \in p^+$ for a suitable word $p$ and all other summands independent.
\end{lemma}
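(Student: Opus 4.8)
The plan is to argue by induction on the number $n$ of blocks of the left-hand word $W=a_1a_2\cdots a_n$, using Lemma \ref{words} and Lemma \ref{words3} to expose a common period, and the soundness of prepending labels (rules (Label) and (LabelSum) of Figures \ref{rules} and \ref{inf}) both to absorb a non-periodic leading block and to separate the summands that cannot be aligned. Every transformation I apply only prepends a common label word $u$ to both sides of (\ref{summe}), so each intermediate constraint is derivable from the original and hence sound; this keeps the transformed constraint implied by the one we started with.

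The first observation I would record is purely structural: every summand word $a_{j+1}\cdots a_n$ on the right of (\ref{summe}) is a suffix of $W$, so the summands are pairwise dependent. A short calculation then shows that, after prepending $u$, the summand with word $s$ is still a suffix of $uW$ (i.e. it remains dependent) exactly when $u$ is a suffix of the portion of $W$ lying to the left of $s$, and that it becomes \emph{independent} otherwise. This gives a single handle --- the choice of $u$ --- with which to decide which summands survive as dependent.

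I would then choose $u$ so as to make independent every summand that \emph{can} be made independent. For two surviving dependent summands, with the longer word $s$ and the shorter word $s_0$, we have $s=\beta s_0$ for the intervening block word $\beta$, and the survival condition forces $\beta$ to be compatible with a border of $W$; writing this out yields an equation $t'u'=s't'$ of the shape treated in Lemma \ref{words}, which produces $q,r$ and $i$ with $s'=qr$, $u'=rq$ and $t'=q(rq)^i$, so the blocks separating surviving summands are powers of the primitive root $p$ of $rq$. If two different pairs of surviving summands suggest two candidate periods, their overlap gives a relation of the form $x^ny^m=z^k$, and Lemma \ref{words3} collapses the candidates to a single common root $p$; this is precisely the step that guarantees one uniform $p$. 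A final prepending chosen so that $ua_1\in p^+$ makes the leading block a power of $p$ as well, and applying the induction hypothesis after stripping one block finishes the argument, leaving all surviving blocks $a_k\in p^+$ and all remaining summands independent.

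The main obstacle I expect is \emph{simultaneity}: a single word $u$ must at once turn every non-aligned summand independent, preserve the alignment of the periodic ones, and create no spurious new dependency among the summands declared independent. Controlling this interaction, together with reconciling several conflicting candidate periods, is exactly where the two word-combinatorial lemmas do the real work; the remaining bookkeeping (tracking block boundaries under prepending and checking that each rewritten constraint is still implied by (\ref{summe})) is routine.
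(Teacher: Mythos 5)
Your proposal assembles the same toolkit as the paper --- prepending label words as the only admissible transformation, Lemma \ref{words} to extract a period from suffix relations, Lemma \ref{words3} to merge competing periods --- but two load-bearing steps fail. The first is your dependency criterion. Write $W=a_1\cdots a_n=W_Ls$ for a summand word $s$. After prepending $u$ the summand reads $us$ and the left-hand side $uW_Ls$, so the summand stays dependent iff $u$ is a suffix of $uW_L$, which is equivalent to $u$ being a suffix of the left-infinite word $\cdots W_LW_LW_L$; this is \emph{not} equivalent to "$u$ is a suffix of $W_L$" once $|u|>|W_L|$. For example, with $W_L=lr$ and $u=lrlr$, the word $lrlr$ is a suffix of $uW_L=lrlrlr$, so the summand stays dependent, although $lrlr$ is not a suffix of $lr$. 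The regime of long $u$ is exactly where the content of the lemma lives (periodicity of the blocks only emerges for long prepended words), so with your criterion you would declare summands independent that in fact remain dependent, and the transformed constraint need not have the claimed shape. Second, the simultaneity problem that you yourself single out as the main obstacle is precisely what your sketch never proves: you do not exhibit a single $u$ that kills the non-aligned summands while keeping the aligned ones, you defer it to "the two word-combinatorial lemmas", which by themselves say nothing about such a choice. Moreover your inductive step is unavailable: "stripping one block" is not an operation the proof system permits (rules (Label) and (LabelSum) in Figures \ref{rules} and \ref{inf} only prepend labels), and the "final prepending chosen so that $ua_1\in p^+$" need not exist at all, e.g.\ when $a_1$ is not a suffix of any power of $p$.

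The paper needs neither selective killing nor induction. It splits on whether there exists \emph{one} word $t$ under which \emph{all} summands stay dependent. If yes, Lemma \ref{words} applied to $ta_1\cdots a_n=st$ gives $t=q(a_1\cdots a_n)^i$; the remaining suffix conditions then force $a_1\cdots a_j$ and $a_{j+1}\cdots a_n$ to commute for every split $j$, yielding common roots $p_j$ for each split, and Lemma \ref{words3}, applied to $(a_1\cdots a_n)^2=p_1^{i_1+j_1}p_2^{i_2+j_2}=p_3^{2(i_3+j_3)}$ and its analogues, merges all these roots into a single $p$ with every $a_i\in p^+$. If no such $t$ exists, all summands can be made independent, and the conclusion holds vacuously with a single block. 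If you wish to rescue your "selective" variant, the clean route is to prepend a high power of $W$ itself: the summands that then survive are exactly those $j$ for which $a_1\cdots a_j$ is a power of the primitive root $p$ of $W$, and every block between consecutive surviving boundaries, including the leading one containing the prepended word, lies in $p^+$ automatically --- that observation replaces both your flawed criterion and your induction.
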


\begin{proof}We assume that there is a label word $t\in \Sigma^*$ which we can apply from the left such that all summands stay dependent of
$ a_1a_2\dots a_n x$. 
(If such a $t$ does not exist, then all summands are already independent.) This is,
 \begin{align}\label{gl}
  ta_1a_2\dots a_n\text{ has the suffixes } ta_2\dots a_n, ta_3\dots a_n,\dots ta_n,t.
 \end{align}
We apply Lemma \ref{words} to $ta_1a_2\dots a_n = st$  with $ u=a_1\dots a_n$ and obtain $r,q \in \Sigma^*$ such that $a_1\dots a_n = rq, t = q(rq)^i = q(a_1\dots a_n)^i$.

Thus $ta_2\dots a_n = q(a_1\dots a_n)^ia_2\dots a_n$.  According to (\ref{gl}), $ta_2\dots a_n$ is a suffix of $ta_1a_2\dots a_n= q(a_1\dots a_n)^{i+1}$.
This means that $a_1$ and $a_2\dots a_n$ commute. Thus there is $p_1$ such that both are  in $p_1^+$.

Similarly, $ta_1a_2\dots a_n$ has the suffix $ta_3\dots a_n=q(a_1\dots a_n)^ia_3\dots a_n$, and thus $a_1a_2$ and $a_3\dots a_n$ commute (see Figure \ref{fig} ,where words of the same length are written in boxes).
We can thus conclude that there is $p_2$ with $a_1a_2$ and $ a_3\dots a_n $ are in $ p_2^+$.
\begin{figure}
\begin{center}\begin{tabular}{l|l|l|} \hline
 &$a_1 a_2 \dots a_n$& $a_3 \dots a_n$\\\hline
$a_1 a_2$ &$a_3 \dots a_n a_1 a_2$ &$a_3 \dots a_n$\\
\hline
\end{tabular}\end{center}
\caption{Commuting words $a_1a_2$ and $a_3\dots a_n$}
 \label{fig}
\end{figure}
We proceed the same way until we obtain in the last step that $a_1\dots a_{n-1}$ and $a_n$ commute.
We now write $a_1 = p_1^{i_1}, a_2\dots a_n = p_1^{j_1}$ and $a_1a_2= p_2^{i_1}, a_3\dots a_n = p_2^{j_2}$, etc.

Application of Lemma \ref{words3} 
allows us to conclude from
$(a_1\dots a_n)^2 = p_1^{i_1+j_1}p_2^{i_2+j_2} = p_3^{2(i_3+j_3)}$
that $p_1,p_2,p_3 \in p^*$ for a certain $p.$ Thus all $p_i$ are in $p^+$ and  for all $i$, we have $a_i \in p^+$.
\end{proof}

\begin{theorem}\label{thm}
Satisfiability of linear tree constraints is semi-decidable.
\end{theorem}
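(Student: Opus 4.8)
The plan is to build a procedure that halts and reports ``satisfiable'' on exactly the satisfiable inputs, assembling the ingredients developed above into a reduction to linear programming. First I would put the system into the normal form of Section~\ref{dec}, separating the constraints into lower bounds, upper bounds, and undirected constraints, and encoding the undirected ones as the DAG described earlier. Using soundness and completeness of the proof system together with the regularity of $L_{x,y}$ (Theorem~\ref{regular}), I would compute for each arithmetic-variable pair $a,b$ the set $L_a^{\geq}\cap L_b^{\leq}$ of nodes whose root value is constrained both above and below. Nodes lying in no such set are bounded in at most one direction and, as already observed, may be fixed to $0$ or $\infty$ without affecting satisfiability; so only the two-sidedly bounded nodes matter.

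Next I would attach to each relevant node the set of intervals forced on its root value by propagating the bounds along the constraints, exactly as in Examples~\ref{why} and~\ref{why2}, and declare two nodes equivalent when they carry the same interval set (a \emph{class}). The central claim, which I would prove using Lemma~\ref{words2}, is that a satisfiable system produces only finitely many classes. The point is that Lemma~\ref{words2} forces the label words generating bounds to become periodic, all lying in $p^+$ for a common word $p$; hence along each such direction the interval endpoints are multiplied by fixed rational factors with a fixed period. If some per-period factor differs from $1$, the corresponding bound escapes monotonically to $0$ or $\infty$, and for a two-sidedly bounded node this eventually forces a lower bound above an upper bound, contradicting satisfiability — precisely the phenomenon in Example~\ref{why2}. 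Thus in a satisfiable system all such factors equal $1$, the interval sets repeat with the common period, and only finitely many distinct classes occur.

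With finitely many classes in hand, I would translate the requirement that every interval be nonempty into a finite linear program over the arithmetic variables, and argue equisatisfiability: a feasible point yields, by choosing a value in each interval and extending periodically over classes, a concrete family of trees satisfying all constraints, while conversely any solution makes every interval nonempty. The semi-decision procedure then enumerates levels, waits for the class set to stabilize, and checks the resulting linear program; on a satisfiable input this halts with a feasible program and answers ``satisfiable'', and by equisatisfiability it never answers ``satisfiable'' on an unsatisfiable one. Combined with Lemma~\ref{contr}, which semi-decides unsatisfiability, this yields the decision procedure promised in the introduction.

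The main obstacle is the finiteness-of-classes claim. Individual constraints spawn infinitely many derived inequalities through label application and transitivity, so one must show that the induced interval patterns actually stabilize rather than merely being generated by periodic words; Lemma~\ref{words2} controls a single self-referential constraint, but the interaction of several constraints (through the DAG and through shared variables) must be shown not to manufacture ever-finer interval sets in a satisfiable system. Pinning down that the only way to generate infinitely many classes is to force an escaping bound — and hence unsatisfiability — is the crux of the argument.
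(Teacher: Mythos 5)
Your route is the paper's own: normal form, computation of the two-sidedly bounded nodes $L=\cup_{a,b}(L_a^{\geq}\cap L_b^{\leq})$ via Theorem~\ref{regular}, interval propagation and classes, finiteness of classes in the satisfiable case via Lemma~\ref{words2}, reduction to an equisatisfiable linear program, and combination with Lemma~\ref{contr}. But there is a genuine gap at exactly the point you flag as the crux, and it is not only the finiteness claim itself: your halting rule is not effective. You say the procedure ``enumerates levels, waits for the class set to stabilize, and checks the resulting linear program.'' For a semi-decision procedure this is not enough: at some finite stage you must be able to \emph{certify} that stabilization has occurred, i.e.\ you need a computable window size $W$ such that, if no new interval sets appear for $W$ consecutive levels, then none will ever appear. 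Mere eventual periodicity (which your Lemma~\ref{words2} argument makes plausible) does not yield such a $W$, since you never compute the period in advance. The paper supplies precisely this missing piece: it defines $S$ as the least common multiple of all differences of label-word lengths occurring in the system, proves as Claim~\ref{erstens} that an $S$-level window of stability implies $I=I_{n+S}$ (global stability), and proves as Claim~\ref{zweitens} that this window condition is eventually met in the relevant cases. Without an analogue of Claim~\ref{erstens}, your procedure can never legitimately halt and answer ``satisfiable,'' so semi-decidability does not follow.

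Two smaller omissions. First, your picture of interval endpoints being ``multiplied by fixed rational factors'' oversimplifies: right-hand sides are sums over several variables, propagated also through the DAG of undirected constraints, and the paper needs a case analysis (the properties $(\star)$ and $(\star\star)$, distinguishing whether the shifted summands remain in $L$) to show that stability of the summands' intervals propagates to the left-hand side; this is exactly the multi-constraint interaction you acknowledge but leave unresolved. Second, before any of this the paper guesses which arithmetic variables are $0$ or $\infty$ by trying all pairs of disjoint subsets of them, so that the remaining bounds are finite and nonzero; that assumption is what turns an escaping bound into a genuine contradiction with two-sided boundedness, and it is also the source of the algorithm's exponential running time. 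Your proposal fixes only one-sidedly bounded tree nodes to $0$ or $\infty$, not the arithmetic variables themselves.
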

\begin{proof}
 We assume that all constraints are in the normal form described above.
 Then we introduce an arithmetic variable for each node above level $n$.
 Recall that there are no arithmetic constraints below level $n-1$ and all left hand sides if the constraints have label word of length exactly $n$.
 We now calculate the sets $L_a^{\geq} \cap L_b^{\leq}$ for all pairs of arithmetic variables $a,b$.
 W.l.o.g. we can further assume that all $a, b$ are  nonzero and not infinity: for instance, we could try all variants of the constraints with additional $A\ni a_i = 0$
 or $B \ni a_i = \infty$ for all pairs of disjoint subsets $A,B$ of the set of arithmetic variables and such that all other $a_i$ are neither zero nor infinity.
 If one of them is satisfiable, we return this as a result.
 
 Our procedure starts with step 1 at level $n$ and assigns a set of intervals to each node. 
 For the lower bounds, which have the form $wx \geq w_1y_1+\dots+w_my_m$, with all $w_i$ shorter than $w$,
 we add the interval $[\sum_i\lozenge(w_iy_i),\infty]$.
For the upper bounds, that have the shape $wx \leq w'y- w_1y_1-\dots-w_my_m$, with all $w_i$ and $w'$ shorter than $w$,
we add the interval $[0, \lozenge(w'y)-\sum_i\lozenge(w_iy_i)]$.
For the undirected constraints, we observe the following.
The membership of all nodes in $L_a^{\geq} \cap L_b^{\leq}$ ensures that we have already an interval for the starting nodes of the DAG constructed above.
We traverse it in both directions and add for constraints $x \geq y_1+\dots +y_m$ (resp. $x \leq y-z_1-\dots-z_m$) the new set of intervals 
$\{[\sum_i a_i,\infty] \mid [a_i,b_i] \text{ is an interval for $\lozenge(y_i)$}\}$ (resp. $\{[0, b-\sum_i c_i] \mid [a,b] \text{ is an interval for $\lozenge(y)$ and }[c_i,d_i]
\text{ is an interval for $\lozenge(z_i)$}\}$). 

Further, we set all nodes that have bounds in only one direction to $[0,0]$ or $[\infty, \infty]$. We
denote the set of intervals for node $\lozenge(x)$ with $I(x)$, and $I_n$ is the set of all $I(x)$ obtained until step $n$.

The unilateral constraint syntax allows us to define a meaningful addition and subtraction on interval sets that formalizes how we compute new interval sets.
\begin{align*}
 I(x) &+I(y) = \{[a+c,\infty]\mid [a,b] \in I(x),[c,d]\in I(y)\},\\
 I(x) &-I(y) = \{[0,b-c]\mid [a,b] \in I(x),[c,d]\in I(y)\}.
\end{align*}
\begin{observation}
The order of evaluation does not play any role for sums of interval sets
(i.e. $I(x) -I(y) -I(z)= I(x) - (I(y) +I(z))$).
\end{observation}
To prove this, let w.l.o.g. be $I(x)= [a,b], I(y)= [c,d],I(z) = [e,f]$.
Then \begin{align*}
      [a,b]-[c,d]-[e,f]&=[0,b-c-e]=[0,b-(c+e)] \\&= [a,b]-[c+e,\infty] =[a,b]-([c,d]+[e,f]).
     \end{align*}
In step $n+1$, we apply the rule (LabelSum) in Figure \ref{inf} to the constraints to make their left sides have a label word of length increased by 1.
Then, for the lower bounds we no longer necessarily have arithmetic variables as roots of the trees on the right, but also nodes equipped with intervals.
Thus, we proceed in a similar way as for the undirected constraints in level $n$, namely add the intervals that can be derived from the variables on the right.
We do the same for the upper bounds and the undirected constraints.
More precisely,
for the lower bounds, we set $I(x) = I(y_1)+\dots +I(y_m)$, and for the upper bounds $I(x) = I(y)- I(z_1)\dots -I(z_m)$.

We claim that after a finite amount of steps, no new intervals are derived any more.
This is, if we see the set of intervals that belong to a node as its \emph{class}, then there are only finitely many different classes.
The reason is that if the intersection of one of the interval sets would be constantly shrinking, we would infinitely often add a nonzero number to the lower bound 
or subtract a nonzero number from the upper bound or divide the upper bound by a positive integer 
(by the assumption that all arithmetic variables are neither zero nor infinity).
But since all considered nodes are bounded from above and below, we would at some point obtain a contradiction (see Example \ref{why2}).
Thus it is enough to give a criterion ensuring that we need no longer search for new classes because we found all.
Having this, the condition that the intersection of all intervals that belong to the same node is nonempty delivers an equisatisfiable linear 
program.
We now define $S$ as the least common multiple of all differences of label word lengths that appear in the constraint system.
For instance, $S$ for the single constraint $lrrlx \geq x+lx$ is $12 = \mathrm{lcm}(4,3)$. 
Note that in the list case, $S$ is a bound on the period length of the solution lists
(cf. \cite{LPAR-21:Decidable_linear_list_constraints}). 

The criterion looks as follows:
If in $S$ iterations no new interval sets for the nodes in $L= \cup_{a,b}(L_a^{\geq} \cap L_b^{\leq})$ are derived any more 
(i.e. for each node $x$ on a certain level and word $p$ with $|p|=S$, the intersection of all intervals that belong to $px$
is equal to the intersection of the intervals for $x$), 
then we have found all of them.

There are two things to show, namely 
 that the premise of this criterion implies $I = I_n$ for a $n\in \mathbb{N}$ and
 that this premise will finally hold.
 
\begin{claim}[Part 1]
\label{erstens} 
If there is a $n\in \mathbb{N}$ such that for all $x\in L$ on level $n,\dots, n+S$ and for all label words $p$ with $|p|= S$, the set $I(px)$ is equal to
$I(x)$, then $I = \cup_{j\in \mathbb{N}} I_j = I_{n+S}$.
\end{claim}

\begin{claim}[Part 2]
\label{zweitens} 
There is a $n\in \mathbb{N}$ such that for all $x\in L$ on level $n,\dots,n+S$ and for all label words $p$ with $|p|= S$, the set $I(px)$ is equal to
$I(x)$.
\end{claim}
To prove the first, 
we show that for all $k= 0,\dots S$ and for all $l\in \mathbb{N}$, we have $I_{n+k+l\cdot S} = I_{n+k}$. 
We consider three cases.
If we have a lower bound constraint $px \geq \sum_i y_i$, we know that for all label words $p$ with 
$px \in L $
and $|p|= S$, this implies $qpx \geq \sum_i qy_i$. The lower bounds of the intervals
for $qy_i$ are not stronger than those for $y_i$. This follows from the assumption if $qy_i \in L$. 
It is also true if $qy_i\notin L$, because then the interval for $qy_i$ must be $[0,0]$ (since it is less or equal to $qpx$, which is at most $b$ and 
so it can not be $[\infty,\infty]$)
and thus it delivers no new bounds at all.
We mark this property by $(\star)$.
Similarly, in case of upper bounds $px \leq y-\sum_i z_i$, 
the upper bounds for $qy$ and the lower bounds of the intervals for $qz_i$ are not stronger than those for $y$ and $z_i$. 
Again, if $qy,qz_i\in L$, this is a consequence of the assumption, and if $qy \in L$ and $qz_i\notin L$, then 
$qz_i$ has interval $[0,0]$. Last, if $qy \notin L$ then $qy$ has interval  $[\infty, \infty]$ and delivers no new bounds.
This property is called $(\star\star)$.
In these two cases, the right hand side in $(\star)$ and $(\star\star)$ is on level less or equal to the level on the left.

Thus we may assume that all $py_i,qy,qz_i \in L$. For all label words $q,p$ with $qpx\in L$ and $|q|,|p|=S$, we have $I(qy_i) = I(y_i),
I(qy)= I(y),I(qz_i)= I(z_i)$ and
the set of intervals for $qpx$, which is the sum  
of the intervals for the $qy$ (resp. the difference between the intervals of $qy$ and the $qz_i$) is (after intersection) 
not smaller than the set of intervals for 
$qpx$ (according to $(\star)$ and ($\star\star$)), and also not smaller than the interval for $x$.
More precisely, we have
\begin{align*}
 I(qpx) &= I(qy_i)+\dots+I(qy_m)=I(y_i)+\dots+I(y_m) = I(px)= I(x), \text{ or resp.}\\
  I(qpx) &= I(qy)-I(qz_i)-\dots-I(qz_m)= I(y)- I(z_1)-\dots-I(z_m) = I(px)= I(x).
\end{align*}
The last case is if we have an undirected constraint $x \geq y+z$. This implies $px \geq py+pz$,
and with $I(px)= I(x),I(py)= I(y),I(pz) = I(z)$, we have for the new interval obtained from the undirected constraint
$I_{new}(px)= I(py)+I(pz)= I(y)+I(z)= I(x)$ etc. If there are no changes in the intervals for $p,q$ of length $S$, then there are no changes at all and $I= I_{n+S}$.

To prove the second part of the claim, we assume that
for all levels there is a $p$ of length $S$ and $x$ on that level (optionally plus a number between 1 and $S$) 
such that $I(px)\neq I(x)$.
All constraints on the variables in $L$ (except a subset of the undirected constraints where all label words have the same length)
imply constraints of the form $qy\geq z+R$ (resp. $qy\leq z-R$) 
with $|q|=S$. For the lower bounds, we have to consider all possibilities for the choice of $y$,
whereas for the upper bounds there is only one positive summand.
We can therefore assume that $I(px)\subset I(x)$ because of the choice of $S$: else, if $pqy$ had strictly weaker lower bounds 
(resp. strictly weaker upper bounds) than $pz$, the interval $I(pqy)$ would contain $I(pz)$ and thus $qy$ could not have $z$ as a bound.
Then we have
$I(pqy) =  I(pz)+I(R)$ (resp. $I(pqy) =  I(pz)-I(R)$).
We now assume w.l.o.g. that $y$ plays the role of the $x$ above and that $z = y$ holds\footnote{If this is not the case for the initial $y$, the next candidate for $x$ is $z$.}.
So we have a constraint $px\geq x+R$ or $px \leq x-R$ derivable just by unfolding using the (LabelSum) rule in Figure \ref{inf}.
We just treat the first since both are similar.

According to Lemma \ref{words2}, either some label words of the summands in $R = r_1+\dots+r_m$ and 
$p$ are powers of the same path $t$, or in the next step all summands in $qR\coloneqq qr_1+\dots+qr_m$ are independent of $qpx$ for all $q$ of length $S$.
If the second happens, if no tree $t^ix$ is reachable\footnote{No lower bound constraint on any summand $t'x$ in $R$ with $t'\in t^*$ on the right exists --- which is decidable according to 
Theorem \ref{regular}.} 
from $R$, then either this constraint delivers no new bounds below level $n+S$, or 
$R$ must contain a tree $z$ that is at least constant when seen as a list $(p'^iz)_i$ along a path $p' \in t^*$ of length $S$. This implies $p= p'$, which again implies
$ppx= t^jpx = t^{2j} x \geq t^j x +t^{kj}z+ R'$, and that means $p^lx \geq p^{l-1}x+p^{l+k-2}z$ holds --- just like in the first case. 
Overall, we have that either the constraint does not deliver an infinite amount of new bounds or has the form of a strictly increasing list along the path 
$(\lozenge(p^i))_i$.  
In both cases, this is a contradiction, since only finitely many $p^ix$ can be in $L$, 
thus $I(p_ix)$ is at some point equal to $[\infty,\infty]$ and then stops changing.

Claim 1 and Claim 2 ensure that in a set of cases including the satisfiable ones, 
we will only derive finitely many different intervals. 
Thus the problem to decide whether the values  of the arithmetic variables can be chosen such that
these intervals are all nonempty can be solved by linear programming. If and only if they can be chosen this way, 
the constraints are satisfiable.

This completes our proof. \end{proof}
 Combined with the semi-decidable unsatisfiability, we can decide UTC. 
\begin{example}
 Let the constraints be 
 \begin{align*}
 \lozenge(y) = 1, y\geq ly,y\geq ry \text{   and   }
  lx \geq x+y,rx \geq x+y,x\geq lr x.
 \end{align*}
Then all nodes in $(l|r)^+y$ are assigned the intervals $[0,0]$.
Similarly, $l^+x$ and $r^+x$ and all other nodes are set to $[\infty,\infty]$, except those nodes that have bounds in two directions (i.e. are in $L$).
The only nodes in $L$ are the roots of $lr^*x$ and $\lozenge(y)$.
Thus we only need to compute intervals for $lr^*x$.
The root $\lozenge(x)$ gets the interval $[1,1].$ On level two, there is no node in $L$.
Then, on level three, $\lozenge(lrx)$ is labeled with the intervals $[1,\infty]$ and $[0,1]$.
Thus their intersection is equal to $[1,1]$. The same happens on level $4,6,8,$ etc.
Indeed, we can easily check that another solution than one with $\forall w \in (lr)^*.\lozenge(wx)=1$ is not possible.
\end{example}

\section{Conclusion}\label{s6}
 We have proven that linear constraints over infinite trees, as generated by an automatic resource type inference for the language RAJA,
 are decidable. Our approach uses finite automata to generalize the list constraint theory to trees.
 For the latter, satisfiability was previously proven decidable in polynomial time. 
 In contrast to that, our algorithm for trees needs exponential time, because the number of the linear programs that 
 we reduce the problem to is exponential in the size of the input.
 
 With this result, we can now analyze arbitrary object oriented (RAJA-) programs with respect to their resource consumption.
 We can read off upper bounds on the memory usage from the solutions of the constraints. 
 The next parts of our planned future work include a more concrete description of minimal constraint solutions as closed formulas and
 an implementation based on the existing RAJA tool.
 
 We also will further investigate possibilities to increase the efficiency by optimizing the decision procedure. 
 In order to develop a powerful tool for analyzing real programs, we plan to add Java features (as exceptions, garbage collection, concurrent threads) to RAJA.
 Another approach would be implementing an automatic translation from Java code to an equivalent RAJA program (w.r.t. the resource consumption),
 which can then be analyzed using our results.


\bibliography{LiteraturePaper}
\nocite{*}

\end{document}